\documentclass[a4paper,UKenglish,cleveref]{lipics-v2021}

\hideLIPIcs
\nolinenumbers

\setlength\overfullrule{14pt} %

\usepackage[ruled]{algorithm}
\newcommand{\psfrage}[1]{{\color{blue}{\sf[PS: #1]}}} %
\newcommand{\hpfrage}[1]{{\color{violet}\sf[HP: #1]}} %
\newcommand{\gvfrage}[1]{{\color{teal}\sf[GV: #1]}} %
\newcommand{\pffrage}[1]{{\color{red}\sf[PF: #1]}} %
 \renewcommand{\psfrage}[1]{} \renewcommand{\hpfrage}[1]{} \renewcommand{\gvfrage}[1]{} \renewcommand{\pffrage}[1]{}

\usepackage{xcolor}
\usepackage{xspace}
\usepackage{booktabs}
\usepackage{dsfont}
\usepackage{marvosym}
\usepackage{placeins}
\usepackage{rotating}
\usepackage[nocompress]{cite}

\usepackage{pgfplots}

\usepgfplotslibrary{groupplots}
\usetikzlibrary{pgfplots.statistics}
\usetikzlibrary{backgrounds}

\pgfdeclareplotmark{flippedTriangle}{%
  \pgfpathmoveto{\pgfqpointpolar{-90}{1.2\pgfplotmarksize}}%
  \pgfpathlineto{\pgfqpointpolar{30}{1.2\pgfplotmarksize}}%
  \pgfpathlineto{\pgfqpointpolar{150}{1.2\pgfplotmarksize}}%
  \pgfpathclose%
  \pgfusepath{stroke}%
}
\pgfdeclareplotmark{rightTriangle}{%
  \pgfpathmoveto{\pgfqpointpolar{0}{1.2\pgfplotmarksize}}%
  \pgfpathlineto{\pgfqpointpolar{120}{1.2\pgfplotmarksize}}%
  \pgfpathlineto{\pgfqpointpolar{240}{1.2\pgfplotmarksize}}%
  \pgfpathclose%
  \pgfusepath{stroke}%
}
\pgfdeclareplotmark{leftTriangle}{%
  \pgfpathmoveto{\pgfqpointpolar{-180}{1.2\pgfplotmarksize}}%
  \pgfpathlineto{\pgfqpointpolar{-60}{1.2\pgfplotmarksize}}%
  \pgfpathlineto{\pgfqpointpolar{60}{1.2\pgfplotmarksize}}%
  \pgfpathclose%
  \pgfusepath{stroke}%
}
\pgfdeclareimage[interpolate,height=1.85mm,width=1.85mm]{lemonMark}{fig/lemon_yellow}
\pgfdeclareplotmark{lemon}{\pgftext[at=\pgfpointorigin]{\pgfuseimage{lemonMark}}}
\pgfdeclareimage[interpolate,height=1.85mm,width=1.85mm]{lemonMarkVl}{fig/lemon_yellow_vl}
\pgfdeclareplotmark{lemonVl}{\pgftext[at=\pgfpointorigin]{\pgfuseimage{lemonMarkVl}}}

\definecolor{colorDirectRankStoring}{HTML}{F8BA01}
\definecolor{veryLightGrey}{HTML}{F2F2F2}
\definecolor{colorHollowTrie}{HTML}{000000}
\definecolor{colorHollowTrieDist}{HTML}{000000}
\definecolor{colorCentroidHollowTrie}{HTML}{000000}
\definecolor{colorPaCoTrieJava}{HTML}{4DAF4A}
\definecolor{colorPathDecomposedTrie}{HTML}{984EA3}
\definecolor{colorVllcp}{HTML}{E41A1C}
\definecolor{colorLcp}{HTML}{E41A1C}
\definecolor{colorTwoStepsLcp}{HTML}{E41A1C}
\definecolor{colorZFastTrieDistributor}{HTML}{377EB8}
\definecolor{color1}{HTML}{444444}
\definecolor{color2}{HTML}{A65628}
\definecolor{color3}{HTML}{377EB8}

\pgfplotscreateplotcyclelist{myColorList}{%
  colorDirectRankStoring,mark=flippedTriangle\\%
  colorHollowTrie,mark=triangle\\%
  colorPaCoTrieJava,mark=pentagon\\%
  colorPathDecomposedTrie,mark=square\\%
  colorVllcp,mark=x\\%
  colorZFastTrieDistributor,mark=diamond\\%
}

\pgfplotsset{
  compat=newest,
  mark repeat*/.style={
    scatter,
    scatter src=x,
    scatter/@pre marker code/.code={
      \pgfmathtruncatemacro\usemark{
        or(mod(\coordindex,#1)==0, (\coordindex==(\numcoords-1))
      }
      \ifnum\usemark=0
        \pgfplotsset{mark=none}
      \fi
    },
    scatter/@post marker code/.code={}
  },
  major grid style={thin,dotted},
  minor grid style={thin,dotted},
  ymajorgrids,
  yminorgrids,
  every axis/.append style={
    scale only axis,
    line width=0.7pt,
    tick style={
      line cap=round,
      thin,
      major tick length=4pt,
      minor tick length=2pt,
    },
    mark options={solid},
  },
  legend cell align=left,
  legend style={
    line width=0.7pt,
    /tikz/every even column/.append style={column sep=3mm,black},
    /tikz/every odd column/.append style={black},
    mark options={solid},
  },
  legend style={font=\small},
  title style={yshift=-2pt},
  enlarge x limits=0.04,
  every tick label/.append style={font=\footnotesize},
  every axis label/.append style={font=\small},
  every axis y label/.append style={yshift=-1ex},
  /pgf/number format/1000 sep={},
  axis lines*=left,
  xlabel near ticks,
  ylabel near ticks,
  axis lines*=left,
  label style={font=\footnotesize},
  tick label style={font=\footnotesize},
  enlargelimits=0.05,
  plotCompetitorConstruction/.style={
    title style={yshift=-15pt},
    xlabel={Bits/key},
    extra x ticks={13},
    extra x tick labels={$\geq$},
    ylabel={Constr. MKeys/s},
    xlabel shift=-5pt,
    width=42mm,
    height=25mm,
    only marks,
    ymax=9.5,
    xmin=2.1,
    xmax=13.5,
  },
  plotCompetitorQueries/.style={
    title style={yshift=-15pt},
    xlabel={Bits/key},
    ylabel={Query MKeys/s},
    extra x ticks={13},
    extra x tick labels={$\geq$},
    xlabel shift=-5pt,
    width=42mm,
    height=25mm,
    only marks,
    ymax=2.8,
    xmin=2.1,
    xmax=13.5,
  },
  plotThresholds/.style={
    title style={yshift=-15pt},
    xlabel={Recursion threshold},
    ylabel={Bits/key},
    xtick={64,128,256,512},
    xlabel shift=-2pt,
    width=42mm,
    height=25mm,
    ymax=7.2,
    ymin=5.8,
    xmin=60,
    xmax=540,
    cycle list name=myColorList,
    log ticks with fixed point,
  },
  plotThresholdsQ/.style={
    title style={yshift=-15pt},
    xlabel={},
    ylabel={Query time},
    xtick={64,128,256,512},
    xticklabels={,,,},
    xlabel shift=-5pt,
    width=42mm,
    height=25mm,
    ymax=4000,
    ymin=800,
    xmin=60,
    xmax=540,
    cycle list name=myColorList,
    log ticks with fixed point,
  }
}

\usepackage{pgfplots}
\lstset{mathescape=true, columns=fullflexible, basicstyle=\small, escapechar=\&, keywords={, return, Function, if, then,}}

\crefname{listing}{Algorithm}{Algorithms}

\newcommand{\us}{\ensuremath{u}}%
\newcommand{\universe}{\ensuremath{[u]}}
\newcommand{\overhead}{\ensuremath{\eta}}
\newcommand{\eps}{\ensuremath{\varepsilon}}

\newcommand{\myunderline}[1]{{\kern-0.05em\underline{\kern0.05em #1\kern-0.05em}\kern0.05em}}
\newcommand{\myoverline}[1]{{\kern0.05em\overline{\kern-0.05em #1\kern0.05em}\kern-0.05em}}
\newcommand{\etal}{et~al.\@}
\newcommand{\Oh}[1]{\mathcal{O}\!\left( #1\right)}

\newcommand{\lemon}{LeMonHash\xspace}
\newcommand{\lemonvl}{LeMonHash-VL\xspace}

\bibliographystyle{plainurl}

\newcommand{\mytitle}{Learned Monotone Minimal Perfect Hashing}
\title{\mytitle}
\titlerunning{\mytitle}

\author{Paolo Ferragina}{University of Pisa, Italy}{paolo.ferragina@unipi.it}{https://orcid.org/0000-0003-1353-360X}{}
\author{Hans-Peter Lehmann}{Karlsruhe Institute of Technology, Germany}{hans-peter.lehmann@kit.edu}{https://orcid.org/0000-0002-0474-1805}{}
\author{Peter Sanders}{Karlsruhe Institute of Technology, Germany}{sanders@kit.edu}{https://orcid.org/0000-0003-3330-9349}{}
\author{Giorgio Vinciguerra}{University of Pisa, Italy}{giorgio.vinciguerra@unipi.it}{https://orcid.org/0000-0003-0328-7791}{}

\newcommand{\myauthorrunning}{P. Ferragina, H.-P. Lehmann, P. Sanders, G. Vinciguerra}
\authorrunning{\myauthorrunning}
\Copyright{Paolo Ferragina, Hans-Peter Lehmann, Peter Sanders, and Giorgio Vinciguerra}
\hypersetup{
  colorlinks=true,
  pdftitle={\mytitle},
  pdfauthor={\myauthorrunning},
  pdfsubject={}
}

\begin{CCSXML}
<ccs2012>
<concept>
<concept_id>10003752.10003809.10010031.10002975</concept_id>
<concept_desc>Theory of computation~Data compression</concept_desc>
<concept_significance>500</concept_significance>
</concept>
<concept>
<concept_id>10002951.10002952.10002971.10003450.10010829</concept_id>
<concept_desc>Information systems~Point lookups</concept_desc>
<concept_significance>500</concept_significance>
</concept>
</ccs2012>
\end{CCSXML}

\ccsdesc[500]{Theory of computation~Data compression}
\ccsdesc[500]{Information systems~Point lookups}
\keywords{compressed data structure, monotone minimal perfect hashing, retrieval}

\supplementdetails[subcategory={Source Code}]{Software}{https://github.com/ByteHamster/LeMonHash}

\relatedversiondetails{An extended abstract of this paper appears in the Proceedings of the 31st Annual European Symposium on Algorithms (ESA 2023)} {https://doi.org/10.4230/LIPIcs.ESA.2023.46}

\funding{
\flag{}
\looseness=-1
This project has received funding from the European Research Council (ERC) under the European Union’s Horizon 2020 research and innovation programme (grant agreement No. 882500).
PF and GV have been supported by the European Union – Horizon 2020 Program under the scheme ``INFRAIA-01-2018-2019 -- Integrating Activities for Advanced Communities'', Grant Agreement n. 871042, ``SoBigData++: European Integrated Infrastructure for Social Mining and Big Data Analytics'' (http://www.sobigdata.eu), by the NextGenerationEU -- National Recovery and Resilience Plan (Piano Nazionale di Ripresa e Resilienza, PNRR) -- Project: ``SoBigData.it -- Strengthening the Italian RI for Social Mining and Big Data Analytics'' -- Prot. IR0000013 -- Avviso n. 3264 del 28/12/2021, by the spoke ``FutureHPC \& BigData'' of the ICSC -- Centro Nazionale di Ricerca in High-Performance Computing, Big Data and Quantum Computing funded by European Union -- NextGenerationEU -- PNRR, by the  Italian Ministry of University and Research ``Progetti di Ri\-le\-van\-te In\-te\-res\-se Na\-zio\-nale'' project: ``Multicriteria data structures and algorithms'' (grant n. 2017WR7SHH).
}

\acknowledgements{We thank Stefan Walzer for early discussions leading to this paper.}

\EventEditors{Inge Li G{\o}rtz, Martin Farach-Colton, Simon J. Puglisi, and Grzegorz Herman}
\EventNoEds{4}
\EventLongTitle{31st Annual European Symposium on Algorithms (ESA 2023)}
\EventShortTitle{ESA 2023}
\EventAcronym{ESA}
\EventYear{2023}
\EventDate{September 4--6, 2023}
\EventLocation{Amsterdam, the Netherlands}
\EventLogo{}
\SeriesVolume{274}
\ArticleNo{6}

\begin{document}
\maketitle

\def\maxSpeedupNormalDistDRS{16}

\def\maxSpeedupOsmDRS{19}

\def\maxSpeedupUrlRecursiveDRS{7}

\def\maxSpeedupDnaRecursiveDRS{3}

\def\maxSpeedupConstructionNormalDistDRS{2}

\begin{abstract}
  A Monotone Minimal Perfect Hash Function (MMPHF) constructed on a set $S$ of keys is a function that maps each key in $S$ to its rank.
  On keys not in $S$, the function returns an arbitrary value.
  Applications range from databases, search engines, data encryption, to pattern-matching algorithms.

  In this paper, we describe \includegraphics[height=2.3mm]{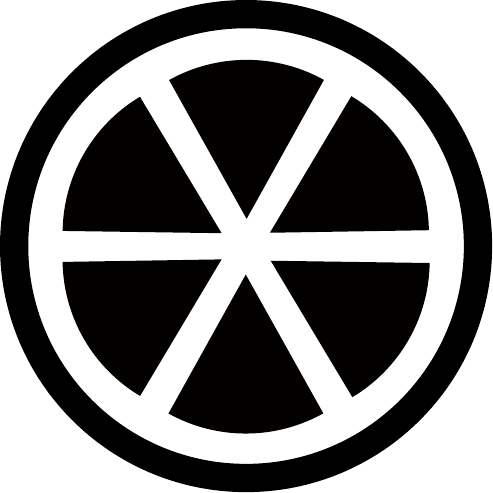}\hspace{.6mm}\lemon, a new technique for constructing MMPHFs for integers.
  The core idea of \lemon is surprisingly simple and effective: we learn a monotone mapping from keys to their rank via an error-bounded piecewise linear model (the PGM-index), and then we solve the collisions that might arise among keys mapping to the same rank estimate by associating small integers with them in a retrieval data structure (BuRR).
  On synthetic random datasets, \lemon needs 34\% less space than the next larger competitor, while achieving about \maxSpeedupNormalDistDRS{} times faster queries.
  On real-world datasets, the space usage is very close to or much better than the best competitors, while achieving up to \maxSpeedupOsmDRS{} times faster queries than the next \emph{larger} competitor.
  As far as the construction of \lemon is concerned, we get an improvement by a factor of up to \maxSpeedupConstructionNormalDistDRS{}, compared to the competitor with the next best space usage.

  We also investigate the case of keys being variable-length strings, introducing the so-called \lemonvl: it needs space within 13\% of the best competitors while achieving up to \maxSpeedupDnaRecursiveDRS{} times faster queries than the next larger competitor.
\end{abstract}

\section{Introduction}
Given a set $S$ of $n$ keys drawn from a universe $\universe =\{0, \dots, \us-1\}$, a \emph{Monotone Minimal Perfect Hash Function} (MMPHF) is a hash function that maps keys from $S$ to their rank, and returns an arbitrary value for keys not in $S$.
As the name suggests, such a function is both \emph{perfect} because it has no collisions on $S$, and \emph{minimal} because its output range is $[n]$.
Differently from a \emph{Minimal Perfect Hash Function} (MPHF)~\cite{fox1992faster,belazzougui2009hash,pibiri2021pthash,muller2014retrieval,chapman2011meraculous,limasset2017fast,lehmann2022sichash,esposito2020recsplit}, which maps keys from $S$ bijectively to $[n]$ in any order, and from an \emph{Order-Preserving MPHF} (OPMPHF)~\cite{Fox:1991}, which retains a given (arbitrary) order on the keys, an MMPHF takes advantage of the natural order of the universe to rank the keys in $S$ in small space, i.e. without encoding them.
Indeed, encoding $S$ needs $\log{\us \choose n}/n = \Omega(\log \tfrac{\us}{n})$ bits per key, and encoding the ranks via an OPMPHF needs $\log(n!)/n = \Omega(\log n)$ bits per key, whilst an MMPHF may use as few as $\Oh{\log \log \log \us}$ bits per key~\cite{belazzougui2009monotone}, which was recently proven to be optimal~\cite{assadi2023tight}.
Throughout this paper, $\log x$ stands for $\log_2 x$, and we use the $w$-bit~word~RAM~model.

MMPHFs have numerous applications \cite{assadi2023tight}.
They enable efficient queries both in encrypted data \cite{boldyreva2011order} and databases \cite{kurpicz2023pachash,lim2011silt}.
Further applications can be found in information retrieval, where MMPHFs can be used to index the lexicon~\cite{Witten:1999} or to compute term frequencies \cite{BelazzouguiNV13,navarro2014spaces}, and in pattern matching \cite{belazzougui2020linear,gagie2020fully,grossi2010optimal}, where MMPHFs are applied mostly to integer sequences representing the occurrences of certain characters in a text.

Despite the widespread use of MMPHFs and recent advancements on their asymptotic bounds~\cite{assadi2023tight}, the practical implementations have not made significant progress in terms of new designs and improved space-time performance since their introduction more than a decade ago~\cite{belazzougui2011theoryPractice}, with only some exceptions targeting query time~\cite{grossi2014decomposition}.
As a matter of fact, the solutions in~\cite{belazzougui2011theoryPractice} are very sophisticated and well-optimised, and they offer a vast number of efficient space-time trade-offs that were hard to beat.

In this paper, we offer a fresh new perspective on MMPHFs that departs from existing approaches, which are mostly based on a trie-like data structure on the keys.
We build upon recent advances in (learning-based) indexing data structures, namely the PGM-index~\cite{ferragina2020pgm,Ferragina:2021tcs}, and in retrieval data structures (or static functions), namely BuRR~\cite{dillinger2022burr}.
The former learns a piecewise linear approximation mapping keys in $S$ to their rank estimate.
The latter allows associating a small fixed-width integer to each key in $S$, without storing~$S$.
We combine these two seemingly unrelated data structures in a surprisingly simple and effective way.
First, we use the PGM to monotonically map keys to buckets according to their rank estimate, and we store the global rank of each bucket's first key in a compressed data structure.
Second, since the rank estimate of some keys might coincide, we solve such bucket collisions by storing the local ranks of these keys using BuRR.
We call our proposal \emph{\lemon}, because it \emph{learns} and \emph{leverages} the smoothness of the input data to build a space-time efficient \emph{monotone} MPHF. On the theoretical side, this achieves $\Oh{1}$ bits per key for inputs which are sufficiently random within buckets\,---\,breaking the superlinear lower bound. Practically, on various integer datasets tried, it needs about one-third less space than previous approaches and is an order of magnitude faster.
We also extend \lemon to support variable-length string keys. This approach needs space within 13\% of the best competitors while being up to 3$\times$ faster.

\subparagraph*{Outline.}
We first describe the basic building blocks of \lemon in \cref{s:prelim} and discuss related work in \cref{s:related}.
In \cref{s:lemon}, we describe \lemon for integers and then extend it to variable-length strings in \cref{s:lemonvl}.
In \cref{s:variants}, we discuss variants and refinements, before proving the space-time guarantees of \lemon in \cref{s:analysis}.
In \cref{s:experiments}, we present our experiments.
In \cref{s:conclusion}, we summarise the paper and give an outlook for future work.

\section{Preliminaries}\label{s:prelim}
In this section, we describe the basic building blocks of \lemon.

\subparagraph*{Bit Vectors.}
Given a bit vector of size $n$ and $b\in\{0,1\}$, the $\textit{rank}_b(x)$ operation returns the number of $b$-bits before position $x$, and the $\textit{select}_b(i)$ operation returns the position of the $i$th $b$-bit.
These operations can be executed in constant time using as little as $o(n)$~bits on top of the bit vector \cite{Jacobson:1989,clark1997compact}, and they have very space-time efficient implementations~\cite{kurpicz2022pasta,gbmp2014sea,vigna2008broadword}.

\subparagraph*{Elias-Fano.}\label{s:eliasFano}
Elias-Fano Coding \cite{Elias74, Fano71} is a way to efficiently store a non-decreasing sequence of $n$ integers over a universe of size $\us$.
An integer at position $i$ is split into two parts.
The $\log n$ upper bits $x$ are stored in a bit vector $H$ as a 1-bit in $H[i + x]$.
The remaining lower bits are directly stored in an array $L$.
Integers can be accessed in constant time by finding the $i$th $1$-bit in $H$ using a $\textit{select}_1$ data structure and by looking up the lower bits in~$L$.
Predecessor queries are possible by determining the range of integers that share the same upper bits of the query key using two $\textit{select}_0$ queries, and then performing a binary search on that range.
If there are no duplicates, this binary search takes  $\Oh{\min \{\log  n, \log \tfrac{\us}{n}\}}$ time.
The space usage of an Elias-Fano coded sequence is $n\lceil \log \tfrac{\us}{n}\rceil + 2n + o(n)$ bits (see \cite[\S4.4]{Navarro:2016book}).
Partitioned Elias-Fano \cite{ottaviano2014partitioned} is an extension that uses dynamic programming to partition the input into multiple independent Elias-Fano sequences to minimise the overall space usage.

\subparagraph*{PGM-index.}\label{s:pgm}
The PGM-index~\cite{ferragina2020pgm} is a space-efficient data structure for predecessor and rank queries on a sorted set of $n$ keys from an integer universe $\universe$.
Given a query $q \in \universe$, it computes a rank estimate that is guaranteed to be close to the correct rank by a given integer parameter~$\eps$.
If one stores the input keys, then the correct rank  can be recovered via an $\Oh{\log \eps}$-time binary search on $2\eps+1$ keys around the rank estimate.
The PGM is constructed in $\Oh{n}$~time by first mapping the sorted integers $x_1, \dots, x_n$ in $S$ to points $(x_1, 1), \dots, (x_n, n)$ in a key-position Cartesian plane, and then learning a piecewise linear $\eps$-approximation of these points, i.e. a sequence of $m$ linear models each approximating the rank of the keys in a certain sub-range of~$\universe$ with a maximum absolute error $\eps$.
The value $m$, which impacts on the space of the PGM, can range between 1 and $m \leq n/(2\eps)$~\cite[Lemma~2]{ferragina2020pgm} depending on the ``approximate linearity'' of the points.
In practice, it is very low and can be proven to be $m = \Oh{n/\eps^2}$ when the gaps between keys are random variables from a proper distribution~\cite{Ferragina:2021tcs}.
The time complexity to compute the rank estimate with a PGM is given by the time to search for the linear model that contains the searched key $q$, which boils down to a predecessor search on $m$ integers from a universe of size $\us$.
For this, there exist many trade-offs in various models of computations~\cite{ferragina2020pgm,Navarro:2020pred}.

\subparagraph*{Retrieval Data Structures.}\label{s:retrieval}
A \emph{retrieval data structure} or \emph{static function} on a set $S$ of $n$ keys denotes a function $f: S\rightarrow\{0,1\}^r$ that returns a specific $r$-bit value for each key.
Applying the function on a key not in $S$ returns an arbitrary value.
Retrieval data structures take $(1+\overhead)rn$~bits, where $\overhead\geq0$ is the \emph{space overhead} over the space lower bound of $rn$ bits.

\emph{MWHC} \cite{majewski1996family} is a retrieval data structure based on hypergraph peeling, has an overhead $\overhead=0.23$ and can be evaluated in constant time.
2-step MWHC \cite{belazzougui2011theoryPractice} can have a smaller overhead than MWHC by using two MWHC functions of different widths. %

The more recently proposed \emph{Bumped Ribbon Retrieval} (BuRR) data structure \cite{dillinger2022burr} basically consists of a matrix. The output value for a key can be obtained by multiplying the hash of the key with that matrix. The matrix can be calculated by solving a linear equation system. Because BuRR uses hash functions with \emph{spacial coupling} \cite{walzer2021peeling}, the equation system is almost a diagonal matrix, which makes it very efficient to solve.
When some rows of the equation system would prevent successful solving, BuRR \emph{bumps} these rows (and the corresponding keys) to the next layer of the same data structure.
BuRR has an overhead $\eta=\Oh{\log W / (rW^2)}$ and can be evaluated in $\Oh{1+rW/\log n}$ time, where $W=\Oh{\log n}$ is a parameter called ribbon width.
In practice, BuRR achieves space overheads well below $\overhead=1\%$ while being faster than widely used data structures with much larger overhead \cite{dillinger2022burr}.

\section{Related Work}\label{s:related}
\label{s:mph}
Non-monotone perfect hash functions are a related and very active area of research \cite{fox1992faster,belazzougui2009hash,pibiri2021pthash,muller2014retrieval,chapman2011meraculous,limasset2017fast,lehmann2022sichash,esposito2020recsplit,bez2022recsplit}.
Due to space constraints, we do not review them in detail.
For a more detailed list, refer to Ref. \cite{lehmann2022sichash}.
We also do not describe order-preserving minimal perfect hash functions~\cite{Fox:1991} because their theoretical lower bound can trivially be reached by using a retrieval data structure taking $\log n$ bits per key (plus a small overhead).
Another loosely related result is using learned models as a replacement for hash functions in traditional hash tables \cite{Kraska:2018,sabek2022can}, but it generally has a negative impact on the probe/insert throughput (and most likely on the space too, due to the storage of the models' parameters, which these studies do not evaluate).
\label{s:mmph}
We now look at monotone minimal perfect hash functions, first describing the idea of bucketing before then continuing with specific MMPHF constructions.

\subparagraph*{Bucketing.}\label{s:bucketing}
Bucketing \cite{belazzougui2011theoryPractice} is a general technique to break down MMPHF construction into smaller sub-problems.
The idea is to store a simple monotone, but not necessarily minimal or perfect \emph{distributor} function that maps input keys to buckets.
Each bucket receives a smaller number of keys that can then be handled using some (smaller) MMPHF data structure.
To determine the global rank of a key, we need the prefix sum of the bucket sizes.
For equally-sized buckets, this is trivial.
Otherwise, this sequence can be stored with Elias-Fano coding.
In the paper by Belazzougui \etal~\cite{belazzougui2011theoryPractice}, where many of the following techniques are described, the authors use MWHC \cite{majewski1996family} to explicitly store the ranks within each bucket.
\lemon uses a learned distributor and buckets of expected size 1 (see \cref{s:lemon}).

\subparagraph*{Longest Common Prefix.}\label{s:lcp}
Bucketing with Longest Common Prefixes (LCP) \cite{belazzougui2009monotone} maps keys to equally sized buckets.
A first retrieval data structure maps all keys to the \emph{length} of the LCP among all keys in its bucket.
A second one then maps the \emph{value} of the LCP to the bucket index.
Overall, it uses $\Oh{\log\log \us}$ bits per key and query time $\Oh{(\log \us)/w}$, and in practice it has been shown to be the fastest but the most space-inefficient MMPHF~\cite{belazzougui2011theoryPractice}.

\subparagraph*{Partial Compacted Trie.}
First map the keys to equally sized buckets and consider the last key of each bucket as a {\em router} indexed by a \emph{compacted trie}, e.g., a binary tree where every node contains a bit string denoting the common prefix of its descending keys.
During queries, the trie is traversed by comparing the bit string of the traversed nodes with the key to decide whether to stop the search operation at some node (if the prefix does not match), or descend into the left or right subtree based on the next bit of the key.
A \emph{Partial Compacted Trie} (PaCo Trie) \cite{belazzougui2011theoryPractice} compresses the compacted trie above by 30--50\% by exploiting the fact that, in an MMPHF, the trie needs to correctly rank only the keys from the input set.
Therefore, each node can store a shorter bit string just long enough to correctly route all input keys.

\subparagraph*{Hollow Trie.}
A \emph{Hollow Trie} \cite{belazzougui2011theoryPractice} only stores the \emph{position} of the next bit to look at.
Hollow tries can be represented succinctly using balanced parentheses \cite{munro2001succinct}.
To use hollow tries for bucketing, and thus allow the routing of not-indexed keys, we need a modification to the data structure.
The \emph{Hollow Trie Distributor} \cite{belazzougui2011theoryPractice} uses a retrieval data structure that maps the compacted substrings of each key in each tree node to the behaviour of that key in the node (stopping at the left or right of the node, or following the trie using the next bit of the key).
Overall, it uses $\Oh{\log\log\log \us}$ bits per key and query time $\Oh{\log \us}$.

\subparagraph*{ZFast Trie.}
To construct a \emph{ZFast Trie} \cite{belazzougui2009monotone}, we first generate a path-compacted trie.
Then, for prefixes of a specific length (\emph{2-fattest number}) of all input keys, a dictionary stores the trie node that represents that prefix.
A query can then perform a binary search over the length of the queried key.
If there is no node in the dictionary for a given prefix, the search can continue with the pivot as its upper bound.
If there is a node, the lower bound of the search can be set to the length of the longest common prefix of all keys represented by that node.
The ZFast trie uses $\Oh{\log\log\log \us}$ bits per key and query time $\Oh{(\log \us)/w + \log\log \us}$.

\subparagraph*{Path Decomposed Trie.}\label{s:pathDecomposed}
In the previous paragraphs, we described binary tries with a rather high height.
However, those tries are inefficient to query because of the pointer chasing to non-local memory areas.
The main idea behind \emph{Path Decomposed Tries} \cite{Ferragina:2008}, which can be used as an MMPHF \cite{grossi2014decomposition}, is to reduce the height of the tries.
We first select one path all the way from the root node to a leaf.
This path is now contracted to a single node, which becomes the root node in our new path decomposed trie.
The remaining nodes in the original trie form subtries branching from every node in that path.
We take all of these subtries, make them children of the root node, and annotate them by their branching character with respect to the selected path.
The subtries are then converted to path decomposed tries recursively.
In \emph{centroid} path decomposition, the path to be contracted is always the one that descends to the node with the most leaves in its subtree.

\section{\lemon}\label{s:lemon}
We now introduce the main contribution of this paper\,---\,the MMPHF \lemon.
The core idea of \lemon is surprisingly simple.
We take all the $n$ input integers and map them to $n$ buckets using some \emph{monotone} mapping function,  that we will describe later.
We store an Elias-Fano coded sequence with the \emph{global ranks} of the first key in each bucket using $2n+o(n)$ bits.
Given a bucket of size $b$, we use a $\lceil\log b\rceil$-bit retrieval data structure (see \cref{s:retrieval}) to store the \emph{local ranks} of all its keys. 
Note that we do not need to store local ranks if the bucket has only $0$ or $1$ keys. For squeezing space, instead of storing one retrieval data structure per bucket, we store a collection of retrieval data structures so that the $i$th one stores the local ranks of all keys mapped to buckets whose size $b$ is such that $i = \lceil\log b\rceil$.
An illustration of the overall data structure is given in \cref{fig:integerIllustration}.

\begin{figure}[t]
  \centering
  \begin{subfigure}[b]{0.49\textwidth}
    \centering
    \includegraphics[scale=0.8]{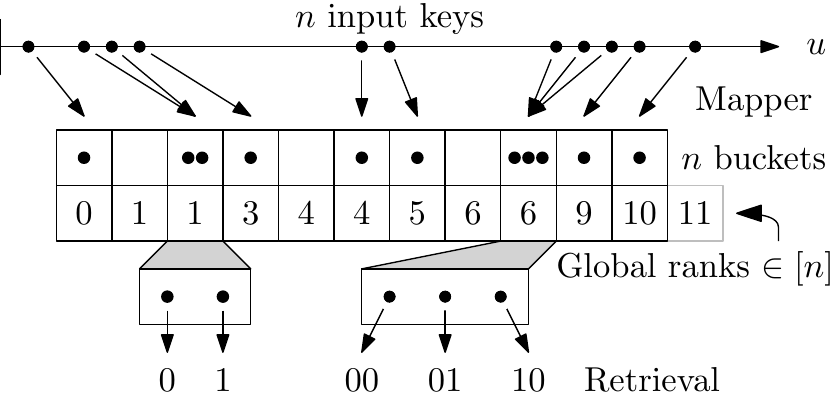}
    \caption{\lemon. Keys are mapped to buckets. Ranks within buckets are stored in (a collection of) retrieval data structures.}
    \label{fig:integerIllustration}
  \end{subfigure}
  \hfill
  \begin{subfigure}[b]{0.49\textwidth}
    \centering
    \includegraphics[scale=0.8]{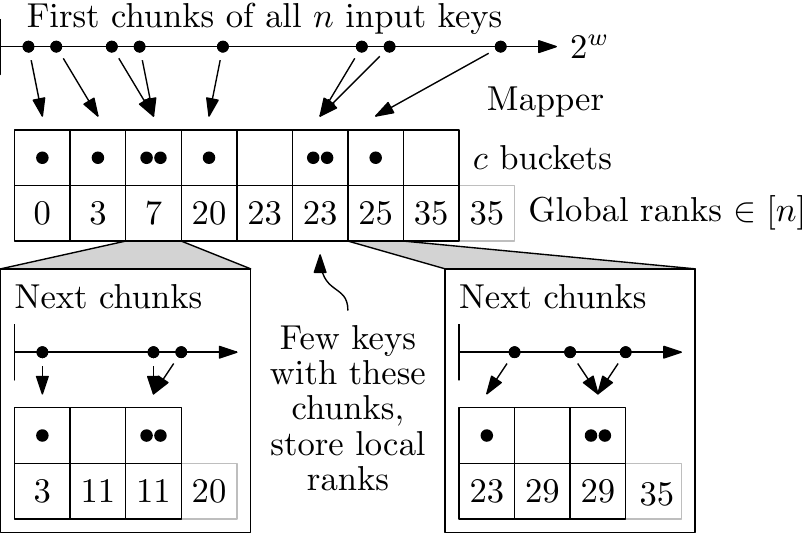}
    \caption{\lemonvl. Global ranks in each level are stored together. Buckets that are not handled recursively use retrieval data structures like before.}
    \label{fig:recursiveIllustration}
  \end{subfigure}
  \caption{Illustration of the \lemon and \lemonvl data structures.}
\end{figure}

\subparagraph*{Bucket Mapping Function.}\label{s:bucketMappingFunction}
The space efficiency of \lemon is directly related to the quality of the monotone mapping function.
For uniform random integers, a linear mapping from input keys to $n$ buckets, i.e. a mapping from a key $x$ to the bucket number $\lfloor xn/\us \rfloor$, leads to an MMPHF with a space usage of just $2.915$ bits per key (see \cref{thm:linearMappingSpace}).
Intuitively, such a linear mapping returns a rank estimate in $[n]$ for a given key.
However, for skewed distributions, the rank estimate can be far away which can create large buckets whose local ranks are expensive to store. 
For example, if the majority of the keys are such that $x<\us/n$, then the first bucket will be large enough to require $\Theta(\log n)$~bits per key, i.e. our MMPHF degenerates to a trivial OPMPHF.
To tackle this problem, we implement the mapping function with a \emph{PGM-index} \cite{ferragina2020pgm}.
As we observed in \cref{s:pgm}, the PGM was originally designed as a predecessor-search data structure. Here, we use the PGM as a {\em rank estimator} that, for a given key, returns an $\eps$-bounded estimate of its rank. To achieve this result in \lemon, we do not store the list of indexed keys and simply use the PGM's rank estimate as the bucket index.
The PGM internally adapts to the input data by learning the smoothness in the distribution via a piecewise linear $\eps$-approximation model, thus it can be thought of as a ``local'' approximation of the linear mapping above.
Real-world data sets can often be approximated using piecewise linear models, as discussed in the literature \cite{Ferragina:2021tcs} and also demonstrated by the good space efficiency of our experiments (see \cref{s:experiments}).
There is a trade-off between the amount of space needed to represent the PGM and the quality of the mapping, which depends on both the input data distribution and the given integer parameter~$\eps$.
In \cref{s:experiments}, we test both a version with a constant $\eps$ value and a version that auto-tunes its value by constructing multiple PGMs and then selecting the optimal~$\eps$.
Finally, we observe that with the PGM mapper, unlike for the linear mapping and other non error-bounded learning-based approaches~\cite{Kornaropoulos:2022,Ferragina:2020book}, the number of retrieval data structures we need to keep is bounded by $\Oh{\log \varepsilon}$ regardless of the input key distribution (see \Cref{thm:lemonPgm}).

\subparagraph*{Queries.}
\looseness=-1
Given a key $q$, we obtain its bucket $i$ using the mapping function.
The global rank of the (first key in the) bucket is the $i$th integer in the Elias-Fano coded sequence of global ranks, which can be accessed in constant time, and the bucket size is computed by subtraction from the next integer in that sequence.
The bucket size $b$ directly tells us which retrieval data structure to query, i.e. the $\lceil \log b \rceil$th one.
Evaluating the retrieval data structure with $q$ gives us its local rank in the bucket.
Adding this to the global rank of the bucket gives us the rank of $q$.
As we show in \cref{s:analysis}, for uniform data, the linear bucket mapper gives constant time queries, while for other inputs we use the PGM mapper and the query time is $\Oh{\log\log u}$.

\subparagraph*{Comparison to Known Solutions.}
Known MMPHFs in the literature typically divide the keys into equal-size buckets and build a compact trie-based distributor.
Unlike them, \lemon learns the data linearities and leverages them to distribute keys to buckets close to their rank.
Whenever some keys collide into a bucket, \lemon handles these keys via a (small) collection of succinct retrieval structures.
In contrast to known solutions, whenever a key is the only one mapped to its bucket, no information needs to be stored in (and no query is issued on) a retrieval data structure.
These features allow \lemon to possibly achieve reduced space occupancy compared to classic MMPHFs, which are oblivious to data linearities.
Also, \lemon can reduce the query time by replacing the cache-inefficient traversal of a trie with the PGM mapper, which in practice is fast to evaluate.

\section{\lemonvl}\label{s:lemonvl}
Of course, the idea of \lemon can be immediately applied to keys whose maximum longest common prefix (LCP) is less than $w$~bits. In this case, each string prefix and the following bit (which are sufficient to distinguish every string from each other) fit into one machine word and thus can be handled efficiently in time and in space by the PGM mapper.
For strings with longer LCPs, we introduce a tree data structure that we call \emph{\lemonvl} (since it handles Variable-Length strings).
The main idea is to simply compute the bucket mapping on a length-$w$ substring of each string, which we call a \emph{chunk}.
Buckets that receive many keys using this procedure are then handled recursively.
Details follow.

\subparagraph*{Overview.}

We start with a root node representing all the string keys in $S$ and consider the set of chunks extracted from each key starting from position $|p|$ (which we store), where $p$ is the LCP among the keys in $S$.
Given these $c$ distinct chunks, we construct a PGM mapper to distribute the keys to buckets in $[c]$, and we store an Elias-Fano coded sequence with the global ranks of the first key in each bucket.
Clearly, different keys can be mapped to the same bucket because the PGM mapper is not perfect (as in the integer case) and because they share the same chunk value (unlike in the integer case).
For example, for the strings $S=\{\texttt{cherry}, \texttt{cocoa}, \texttt{coconut}\}$ with $p=\texttt{c}$ and chunks composed of 3 characters, the keys  \texttt{cocoa} and \texttt{coconut} share the chunk value \texttt{oco} and will be mapped to the same bucket.

If a bucket of size $b$ contains fewer input strings than a specific threshold $t$, we store the local ranks of the strings in the bucket in a $\lceil \log b \rceil$-bit retrieval data structure.
Once again, we do not need to store local ranks if the bucket has only 0 or 1 keys.
If instead the bucket is large (i.e. $b \geq t$), we create a \emph{child node} in the tree data structure by applying the same idea recursively on the strings $S'$ of that bucket. This means that we compute a PGM mapper on the chunks extracted from each string in $S'$ starting from position $|p'|$, where $p'$ is the LCP among the bucket strings $S'$. 
Notice that $|p'|\geq |p|$ but we always guarantee that $S' \subsetneq S$, so the recursion is bounded.
In practice, we set the threshold $t=128$ (see \Cref{s:internalComparison}).

At query time, we can use the sequence of global ranks to calculate the bucket size $b$, which allows determining whether we need to continue recursively on a child (because $b\geq t$) or directly return the global rank of the bucket plus the local rank stored in the  $\lceil \log b \rceil$-bit retrieval data structure.
\Cref{fig:recursiveIllustration} gives an overview of the data structure.

We observe that the global ranks of each node increase monotonically from left to right in each level of the overall tree.
Therefore, we merge all these global ranks in a level into one Elias-Fano sequence, thereby avoiding the space overhead of storing many small sequences.

Of course, each inner node of the tree needs some extra metadata, like the encoding of its bucket mapper, the value of $|p|$, and an offset to its first global rank in the per-level Elias-Fano sequence.
We associate a node to its metadata via a minimal perfect hash function, where the identifier of a node is given by the path of the buckets' indices leading to it.

\label{s:optimisations}
Given the overall idea, there is a wide range of optimisations that we use. %
In the following, we outline the main algorithmic ones and refer the interested reader to our implementation~\cite{sourceCode} and \Cref{s:lowLevelOptimization} for the many other small-and-tricky optimisations, such as the use of specialised instructions like \texttt{popcount} and \texttt{bextr}, or lookup tables.

\subparagraph*{Alphabet Reduction.}%
The number of nodes and the depth of \lemonvl depend on both the length and distribution of the input strings, and on how well the PGM mapper at each node can map strings to distinct buckets given their $w$-bit chunks.
Therefore, we should aim to fit as much information as possible in the $w$-bit chunks.
We do so by exploiting the fact that, in real-world data sets, often only a very small alphabet $\Sigma$ of branching characters distinguish the strings in each bucket, and that we do not care about the other characters.
We extract chunks from the suffix of each string starting from the position following the LCP $p$, as before, but interpret the suffix as a number in radix $\sigma=|\Sigma|$ where each character is replaced by its 0-based index in $\Sigma$ if present, or by 0 if not present.
For example, for a node on the strings $\{\texttt{shoppers}, \texttt{shopping}, \texttt{shops}\}$ whose LCP is $p = \texttt{shop}$, we would store the alphabet $\Sigma=\{\texttt{e}, \texttt{i}, \texttt{p}, \texttt{s}\}$ and map the suffix ``$\texttt{pers}$'' of ``$\texttt{shoppers}$'' to $\text{index}(\texttt{p})\sigma^3 + \text{index}(\texttt{e})\sigma^2 + \text{index}(\texttt{r})\sigma^1 + \text{index}(\texttt{s})\sigma^0 = 2\sigma^3 + 0\sigma^2 + 0\sigma^1 + 3\sigma^0$.
Observe that the chunks computed in this way still preserve the lexicographic order of the strings.
The number of characters we extract is computed to fit as many characters as possible in a $w$-bit word, i.e. $\lfloor w/\log \sigma \rfloor$ characters.
In our implementation over bytes, we store $\Sigma$ via a bitmap of size 128 or 256, depending on whether its characters are a subset of ASCII or not. %
Finally, we mention that a mapping from strings to numbers in radix~$\sigma$ has also been used to build compressed string dictionaries~\cite{Boffa:2022spire}, but the twist here is that we are considering only the alphabet of the branching characters since we do not need to store the keys.

\subparagraph*{Elias-Fano Sequences.}
The large per-level Elias-Fano sequences of global ranks have a very irregular structure.
For example, if many of the strings in a node share the same chunks, there is a large gap between two of the stored ranks.
We can deal with these irregularities and reduce the overall space usage by using partitioned Elias-Fano \cite{ottaviano2014partitioned}.
Furthermore, the PGM mappers do not always provide a very uniform mapping, which thus results in empty buckets.
An empty bucket corresponds to a duplicate offset value being stored in the Elias-Fano sequences (see e.g. the duplicate offset 23 in \cref{fig:recursiveIllustration}).
To optimise the space usage of such duplicates, we filter them out before constructing the partitioned Elias-Fano sequence.
We do this by grouping the stored numbers in groups of 3 numbers.
If all 3 numbers are duplicates of the number before that group, we do not need to store the group.
A bit vector with rank support indicates which groups were removed.

\subparagraph*{Perfect Chunk Mapping.}\label{s:perfectChunkMapping}
In many datasets, there might be only a small number of different chunks, even if the number of strings they represent is large.
For instance, chunks computed on the first bytes of a set of URLs might be a few due to the scarcity of hostnames, but each host may contain many distinct pages.
In these cases, instead of a PGM, it might be more space-efficient to build a (perfect) map from chunks to buckets in $[c]$ via a retrieval data structure taking $c\lceil \log c \rceil$ bits overall (plus a small overhead), where $c$ is the number of distinct chunks.
In practice, we apply this optimisation whenever $c < 128$ (see \cref{s:internalComparison}).

\subparagraph*{Comparison to Known Solutions.}
In essence, \lemonvl applies the idea of \lemon recursively to handle variable-length strings.
Therefore, unlike known solutions, it can leverage data linearities to distribute $w$-bit chunks from the input strings to buckets using small space, and use additional child nodes only whenever a bucket contains many strings that thus require inspecting the following chunks to be distinguished.
Additionally, it performs an adaptive alphabet reduction within the buckets to fit more information in the $w$-bit chunks, thus leveraging the presence of more regularities in the input data. 
Overall, these features result in a data structure that has a small height and is efficient to be traversed.

\section{Variants and Refinements}\label{s:variants}

\looseness=-1
\lemon can be refined in numerous ways, which we only mention briefly  due to space constraints.
Looking at a possible external memory implementation, \lemon can be constructed trivially by a linear sweep and queries are possible using a suitable representation of the predecessor and bucket-size data structures.
\lemon can also be constructed in parallel without affecting the queries, in contrast to the trivial parallelisation by partitioning the input.
In \lemonvl, extracting chunks from non-contiguous bytes reduces the height of the trees but has worse trade-offs in practice.
Finally, we present an alternative to storing the local ranks explicitly.
The idea is to recursively split the universe size of that bucket and record the number of keys smaller than that midpoint.
Despite its query overhead, this technique might be of general interest for MMPHFs.
Refer to \Cref{s:variantsAppendix} for details.

\section{Analysis}\label{s:analysis}
We now prove some properties of our \lemon data structure for integers.
In our analysis, we use succinct retrieval data structures taking $rn+o(n)$~bits per stored value and answering queries in constant time (see \Cref{s:retrieval} and~\cite{dillinger2022burr}).
Furthermore, since our bucket mappers need multiplications and divisions, we make the simplifying assumption $u=2^w$ to avoid dealing with the increased complexity of these arithmetic operations over large integers.

\begin{theorem}
  A \lemon data structure with a bucket mapper that simply performs a linear interpolation of the universe on a list of $n$ uniform random keys needs $\approx n(2.91536 + o(1))$ bits on average%
  \footnote{Numerically, we find that a better space usage of $\approx2.902n$ bits can be achieved by mapping the $n$ keys to only $\approx0.909n$ buckets, but this difference is irrelevant in practice. It is also interesting to note that this is close to the space requirements of most of the practical non-monotone MPHFs \cite{fox1992faster,belazzougui2009hash,pibiri2021pthash,muller2014retrieval,chapman2011meraculous,limasset2017fast,lehmann2022sichash,esposito2020recsplit,bez2022recsplit}. Using an MMPHF can be useful when indexing an array through an MPHF, because sorting the hash values can be more cache efficient than a large number of random accesses to the array.}
  and answers queries in constant time.
  \label{thm:linearMappingSpace}
\end{theorem}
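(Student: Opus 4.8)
The plan is to split the space of \lemon into three parts and bound each separately: (i)~the Elias-Fano sequence storing the global rank of the first key of each of the $n$ buckets; (ii)~the collection of succinct retrieval data structures holding the local ranks; and (iii)~the bookkeeping metadata identifying the individual retrieval structures. Part~(i) contributes $2n+o(n)$ bits, since the sequence has $n$ entries drawn from a universe of size (essentially) $n$, so the leading $n\lceil\log(n/n)\rceil$ term of the Elias-Fano bound from \cref{s:eliasFano} vanishes. Part~(iii) is $\Oh{\log^2 n}=o(n)$, because a bucket holds at most $n$ keys, so $\lceil\log b\rceil$ ranges over $\Oh{\log n}$ values and each corresponding structure can be described with $\Oh{\log n}$ bits. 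The whole game is therefore to compute the expected size of part~(ii).

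For part~(ii), a key landing in a bucket of size $b\ge 2$ costs exactly $\lceil\log b\rceil$ bits (plus the $o(\cdot)$ redundancy of the succinct retrieval structure), while a key alone in its bucket costs nothing. The first step is to pin down the distribution of bucket sizes. Under the assumption $\us=2^w$, the linear mapper $x\mapsto\lfloor x n/\us\rfloor$ throws $n$ uniform random keys into the $n$ buckets so that the load of any fixed bucket is exactly a binomial random variable with $n$ trials and success probability $1/n$; write $\mathrm{Bin}(n,1/n)$ for it. By linearity of expectation, the expected number of keys lying in buckets of size exactly $b$ equals $n\,b\,\Pr[\mathrm{Bin}(n,1/n)=b]$, so the expected number of bits used by part~(ii) is
\[
  \sum_{b\ge 2} n\,b\,\lceil\log b\rceil\,\Pr[\mathrm{Bin}(n,1/n)=b]\;+\;o(n).
\]
Using the uniform bound $\Pr[\mathrm{Bin}(n,1/n)=b]=\binom{n}{b}n^{-b}(1-1/n)^{n-b}\le 1/b!$ together with the convergence $\Pr[\mathrm{Bin}(n,1/n)=b]\to e^{-1}/b!$, dominated convergence lets me pass to the limit and obtain $n\bigl(c+o(1)\bigr)$ bits, where $c=e^{-1}\sum_{b\ge 2}\frac{\lceil\log b\rceil}{(b-1)!}$.

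It then remains only to evaluate the series numerically. Grouping terms by the value $j=\lceil\log b\rceil$ (that is, $b\in(2^{j-1},2^{j}]$) makes it converge extremely fast — the terms with $j\le 4$ already fix every digit that matters — and yields $c\approx 0.91536$. Adding the $2n$ bits of part~(i) gives the claimed $n(2.91536+o(1))$ bits on average, the $o(n)$ absorbing the Elias-Fano redundancy (including its $\textit{select}$ support), the aggregate $o(n)$ overhead of the $\Oh{\log n}$ succinct retrieval structures, and the $\Oh{\log^2 n}$ metadata. For the query bound: evaluating the mapper amounts to a multiplication whose high word is $\lfloor x n/\us\rfloor$ when $\us=2^w$, the Elias-Fano access and the subtraction recovering $b$ are $\Oh{1}$, selecting the $\lceil\log b\rceil$-th retrieval structure is $\Oh{1}$, and one retrieval query is $\Oh{1}$, for $\Oh{1}$ total.

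I expect the main technical obstacle to be the rigorous justification of the limit interchange in part~(ii) — formally, arguing that the contribution of atypically large buckets (where $\lceil\log b\rceil$ keeps growing) is genuinely swept into the $o(n)$ — together with the more mundane bookkeeping that the succinct-structure redundancies and the per-structure metadata really sum to $o(n)$ rather than, say, $\Theta(n/\operatorname{polylog} n)$ per structure times a growing number of structures.
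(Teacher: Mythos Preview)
Your proposal is correct and follows essentially the same approach as the paper: Poisson approximation of the binomial bucket loads to get the $0.91536n$ retrieval bits, plus $2n+o(n)$ for the Elias-Fano prefix sums, plus constant-time query by composing the mapper, two $\textit{select}$ operations, and one retrieval lookup. If anything, you are more careful than the paper, which simply asserts the Poisson approximation without your dominated-convergence justification or your explicit accounting of the per-structure metadata.
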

\begin{proof}
  We approximate the number of keys per bucket using a Poisson distribution which results in $0.91536n + o(n)$ bits of space for the retrieval data structures.
  On top of that, an Elias-Fano coding of the global bucket ranks gives $2n + o(n)$ bits.
  Refer to \Cref{s:fullProofs} for the full proof.
\end{proof}

While this result is formally only valid for a global uniform distribution, for use in \lemon it suffices if each segment computed by the PGM-index is sufficiently smooth.
It need not even be uniformly random as long as each local bucket has a constant average size.
As long as the space for encoding the segments is in $\Oh{n}$ bits, we retain the linear space bound of \Cref{thm:linearMappingSpace}.
Moreover, the following worst-case analysis gives us a fallback position that holds regardless of any assumptions.

\begin{theorem}\label{thm:lemonPgm}
A \lemon data structure with the PGM mapper takes 
$n(\lceil\log(2\eps+1)\rceil + 2 + o(1)) + \Oh{m \log \tfrac{\us}{m}}$~bits of space in the worst case and answers queries in $\Oh{\log \log_w \tfrac{\us}{m}}$~time, where $m$ is the number of linear models in a PGM with an integer parameter $\eps\geq0$ constructed on the $n$ input keys.
\end{theorem}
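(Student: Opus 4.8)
The plan is to account separately for the three components of the data structure: the collection of retrieval structures storing local ranks, the Elias--Fano sequence of global bucket ranks, and the PGM mapper itself, and then to bound the query time by adding the cost of evaluating the PGM mapper to the constant-time cost of the Elias--Fano access and retrieval evaluation.

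First I would handle the retrieval structures. The key observation is the PGM's $\eps$-error guarantee: every key $x_j$ with true rank $j$ is mapped to a bucket index within $\eps$ of $j$, so the bucket containing $x_j$ has index in $[j-\eps, j+\eps]$. Since bucket indices are monotone in the keys, any single bucket can receive at most $2\eps+1$ consecutive keys. Hence every nonempty bucket has size $b \le 2\eps+1$, so $\lceil \log b \rceil \le \lceil \log(2\eps+1)\rceil$, and there are at most $\lceil \log(2\eps+1)\rceil$ distinct retrieval structures in the collection (one for each possible value of $\lceil\log b\rceil \ge 1$). A key in a bucket of size $b$ contributes $\lceil\log b\rceil \le \lceil\log(2\eps+1)\rceil$ bits to its retrieval structure; summing over all $n$ keys and adding the $o(n)$ succinct overhead of each of the $\Oh{\log\eps}$ structures gives $n(\lceil\log(2\eps+1)\rceil + o(1))$ bits. (Keys in buckets of size $0$ or $1$ contribute nothing, which only helps.)

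Next, the Elias--Fano sequence stores $n$ non-decreasing global ranks over a universe of size $n$, which by the bound quoted in \Cref{s:eliasFano} costs $n\lceil\log(n/n)\rceil + 2n + o(n) = 2n + o(n)$ bits. For the PGM mapper: it consists of $m$ linear models, each specified by a starting key from $\universe$ and a slope/intercept pair of $\Oh{w} = \Oh{\log u}$ bits, plus a predecessor structure over the $m$ starting keys drawn from a universe of size $u$; storing the starting keys via Elias--Fano (or any near-optimal predecessor structure) costs $\Oh{m \log(u/m)}$ bits, and the $\Oh{m}$ slope/intercept data is subsumed in this since $\log(u/m) = \Omega(1)$ (and in the regime where it is not, $m = \Theta(u)$ and $\Oh{m\log(u/m)}$ should be read as $\Oh{m}$). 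Summing the three contributions yields the claimed $n(\lceil\log(2\eps+1)\rceil + 2 + o(1)) + \Oh{m\log(u/m)}$ bits.

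For the query time: computing the bucket index with the PGM amounts to a predecessor search on $m$ keys from a universe of size $u$, followed by an $\Oh{1}$-time linear evaluation; using a fusion-tree/van-Emde-Boas-style layout on the $m$ starting keys this takes $\Oh{\log\log_w(u/m)}$ time on the $w$-bit word RAM. The Elias--Fano access for the global rank and its successor is $\Oh{1}$, the subtraction to get $b$ is $\Oh{1}$, selecting the $\lceil\log b\rceil$th retrieval structure is $\Oh{1}$, and evaluating a succinct constant-time retrieval structure is $\Oh{1}$; so the total is dominated by the PGM predecessor step, giving $\Oh{\log\log_w(u/m)}$. The main obstacle I anticipate is being precise about the PGM-mapper encoding and its predecessor-search cost --- in particular justifying the $\Oh{m\log(u/m)}$ space and the $\Oh{\log\log_w(u/m)}$ time simultaneously from a single representation, and cleanly handling the degenerate small-$u/m$ regime --- whereas the bucket-size argument and the Elias--Fano accounting are essentially immediate from the $\eps$-bound and the preliminaries.
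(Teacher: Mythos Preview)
Your decomposition into retrieval structures, Elias--Fano global ranks, and PGM mapper is exactly the paper's approach, and your bucket-size argument (each bucket has at most $2\eps+1$ keys, hence at most $\lceil\log(2\eps+1)\rceil$ bits per key in retrieval) and the $2n+o(n)$ Elias--Fano accounting match the paper essentially verbatim.

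The gap is precisely where you anticipated it: the PGM encoding. You state each linear model carries a slope/intercept pair of $\Oh{w}=\Oh{\log u}$ bits, but then refer to ``the $\Oh{m}$ slope/intercept data''; those two claims are inconsistent, and the naive $\Oh{m\log u}$ total is \emph{not} $\Oh{m\log(u/m)}$ in general (take $m$ polynomially close to $u$). The paper avoids storing slopes and intercepts explicitly: it encodes the $m$ segments by the Elias--Fano representation of their $(x,y)$-endpoints, which costs $m\bigl(\log\tfrac{u}{m}+\log\tfrac{n}{m}+2\log(2\eps+1)\bigr)+\Oh{m}$ bits (citing~\cite{Ferragina:2022}). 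The crucial step you are missing is then to invoke the PGM bound $2\eps\le n/m\le u/m$ (Lemma~2 of~\cite{ferragina2020pgm}), which lets every term in that expression be absorbed into $\Oh{m\log(u/m)}$. Without this compressed endpoint encoding and the $2\eps\le n/m$ inequality, your space accounting for the mapper does not close.

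For the query time, your ``fusion-tree/van-Emde-Boas-style layout'' gesture is in the right direction but not a proof; the paper simply cites the predecessor structure of~\cite[Theorem~A.1]{Belazzougui:2015}, which simultaneously gives $\Oh{m\log(u/m)}$ bits of space and $\Oh{\log\log_w(u/m)}$ query time on the $m$ segment keys, so both desiderata come from a single black box.
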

\begin{proof}
  The basic idea is that the rank estimate returned by the PGM is guaranteed to be far from the correct rank by $\eps$, which limits the space of the retrieval data structures.
  The $\Oh{m \log \tfrac{\us}{m}}$-term in the space bound is given by a compressed encoding of the linear models in the PGM, and the query time is given by a predecessor search structure on the linear models' keys.
  Refer to \Cref{s:fullProofs} for the full proof.
\end{proof}

The worst-case bounds obtained in \cref{thm:lemonPgm} are hard to compare with the ones of classic MMPHF (see \cref{s:mmph}) due to the presence of $m$ (and $\eps$), which depends on (and must be tuned according to) the approximate linearity of the input data, which classic MMPHFs are oblivious to.%
\footnote{This happens also in other problems in which data is encoded with linear models~\cite{boffa2022learned,Ferragina:2022}.}
Refer to \cref{s:pgm} for bounds on $m$.
Our experiments show that we obtain better space or space close to the best classic MMPHFs, while being much faster (we use a weaker but practical predecessor search structure than the one in \cref{thm:lemonPgm}).
Refer to \cref{s:experiments} for details.

\section{Experiments}\label{s:experiments}
In the following section, we first compare different configurations of \lemon and \lemonvl before comparing them with competitors from the literature.

\subparagraph*{Experimental Setup.}\label{s:experimentalSetup}
We perform our experiments on an Intel Xeon E5-2670 v3 with a base clock speed of 2.3~GHz running Ubuntu 20.04 with Linux 5.10.0.
We use the GNU C++ compiler version 11.1.0 with optimisation flags \texttt{-O3 -march=native}.
As a retrieval data structure, we use BuRR \cite{dillinger2022burr} with $64$-bit ribbon width and 2-bit bumping info.
To store the bucket sizes, we use the select data structure by Kurpicz \cite{kurpicz2022pasta} in \lemon and Partitioned Elias-Fano \cite{ottaviano2014partitioned} in \lemonvl.
To map tree paths to the node metadata, we use the MPHF PTHash \cite{pibiri2021pthash}.
For the PGM implementation in \lemon, we use the encoding from~\cref{thm:lemonPgm} and use a predecessor search on the Elias-Fano sequence (\cref{s:eliasFano}).
In \lemonvl, since the number of linear models in a node is typically small, we encode them explicitly as fixed-width triples $(\textit{key}, \textit{slope}, \textit{intercept})$ and find the predecessor via a binary search on the keys.
All our experiments are executed on a single thread.
Because the variation is very small, we run each experiment only twice and report the average.
We run the Java competitors on OpenJDK 17.0.4 and perform one warm-up run for the just-in-time compiler that is not measured.
With this, the Java performance is expected to be close to C++ \cite{belazzougui2011theoryPractice}.
Because Java does not have an unsigned 64-bit integer type, we subtract $2^{63}$ from each input key to keep their relative order.

The code and scripts needed to reproduce our experiments are available on GitHub under the General Public License \cite{sourceCode,sourceCodeComparison}.

\begin{table}[t]
  \caption{Datasets used for the experiments, together with their length or average (\o) length. Top: real-world string datasets. Middle: real-world integer datasets. Bottom: synthetic integer datasets.}
    \label{tab:datasets}
    \setlength{\tabcolsep}{4.8pt}
      \begin{tabularx}{\columnwidth}{l r r X}
        \toprule
        Dataset     &  $n$ &        Length & Description \\ \midrule
        text        &  35M &  \o{} 11 bytes & Terms appearing in the text of web pages, GOV2 corpus~\cite{belazzougui2011theoryPractice} \\
        dna         & 367M &       32 bytes & 32-mer from a DNA sequence, Pizza\&Chili corpus~\cite{pizzachili} \\
        urls        & 106M & \o{} 105 bytes & Web URLs crawled from .uk domains in 2007 \cite{boldi2008urls} \\ \midrule
        5gram       & 145M &        32 bits & Positions of the most frequent letter in the BWT of a text file containing 5-grams found in books indexed by Google \cite{googleNgram,boffa2022learned} \\
        fb          & 200M &        64 bits & Facebook user IDs \cite{kipf2019sosd} \\
        osm         & 800M &        64 bits & OpenStreetMap locations \cite{kipf2019sosd} \\ \midrule
        uniform     & 100M &        64 bits & Uniform random \\
        normal      & 100M &        64 bits & Normal distribution ($\mu=10^{15}$, $\sigma^2=10^{10}$)\\
        exponential & 100M &        64 bits & Exponential distribution ($\lambda=1$, scaled with $10^{15}$) \\
        \bottomrule
      \end{tabularx}
\end{table}

\subparagraph*{Datasets.}
Our datasets, as in previous evaluations~\cite{belazzougui2011theoryPractice,grossi2014decomposition}, are a \emph{text} dataset that contains terms appearing in the text of web pages \cite{belazzougui2011theoryPractice} and \emph{urls} crawled from .uk domains in 2007 \cite{boldi2008urls}.
Additionally, we also test with \emph{dna} sequences consisting of 32-mers \cite{pizzachili}.
Regarding real-world integer datasets, \emph{5gram} contains positions of the most frequent letter in the BWT of a text file containing 5-grams found in books indexed by Google \cite{googleNgram,boffa2022learned}.
The \emph{fb} dataset contains Facebook user IDs \cite{kipf2019sosd} and \emph{osm} contains OpenStreetMap locations \cite{kipf2019sosd}.
As synthetic integer datasets, we use 64-bit \emph{uniform}, \emph{normal}, and \emph{exponential} distributions.
Refer to \cref{tab:datasets} for details.

\subsection{Tuning Parameters}\label{s:internalComparison}
In the following section, we compare several configuration parameters of \lemon and show how they provide a trade-off between space usage and performance.

\subparagraph*{\lemon.}
Different ways of mapping the keys to buckets have their own advantages and disadvantages.
\Cref{tab:mappers} gives measurements of the construction and query throughput, as well as the space consumption of different bucket mappers.
Our implementation of \lemon with a linear bucket mapper achieves a space usage of $2.94n$ bits, which is remarkably close to the theoretical space usage of $2.91n$ bits (see \cref{thm:linearMappingSpace}).
Of course, a global, linear mapping does not work for all datasets.
A bucket mapper that creates equal-width segments by interpolating between sampled keys (denoted as ``Segmented'' in the table) is fast to construct and query, and it achieves good space usage. But, as for the global linear mapping, this approach is not robust enough to manage arbitrary input distributions.
In particular, for this heuristic mapper, it is easy to come up with a worst-case input that degenerates the space usage.
Conversely, with the PGM mapper, \lemon still achieves $2.96n$ and $2.98n$ bits on uniform random integers but it is more performant and robust on other datasets (except on osm, where the heuristic mapper obtains a good enough mapping with only its equal-width segments, which are inexpensive to store).
In fact, we explicitly avoided heuristic design choices in our PGM mapper (such as sampling input keys, removing outliers, or using linear regression) to not inflate our performance on the tested datasets at the expense of robustness on unknown ones (see Ref. \cite{Kornaropoulos:2022}).
Finally, on most input distributions, auto-tuning the value of $\eps \in \{15, 31, 63\}$  does not have a large effect on the space usage.

\begin{table}[t]
  \caption{Comparison of different bucket mappers. The space usage is given in bits per key, the query throughput in kQueries/second, and the construction throughput (c.t.) in MKeys/second.}
    \label{tab:mappers}
    \centering
    
\setlength{\tabcolsep}{4.5pt}
\begin{tabular}{l r r r r r r r r r r r r}
    \toprule
    Dataset & \multicolumn{3}{c}{Linear mapper} & \multicolumn{3}{c}{PGM $\eps=\text{auto}$} & \multicolumn{3}{c}{PGM $\eps=31$} & \multicolumn{3}{c}{Segmented} \\
              \cmidrule(lr){2-4}                  \cmidrule(lr){5-7}                    \cmidrule(lr){8-10}                 \cmidrule(lr){11-13}
            & bpk & kq/s & c.t.                 & bpk & kq/s & c.t.                   & bpk & kq/s & c.t.                 & bpk & kq/s & c.t.             \\ \midrule
          5gram &  5.60 & 1833.5 & 6.2 & 2.62 & 1747.0 & 3.8 & 2.63 & 1779.4 & 8.5 & 2.64 & 2145.9 & 14.5 \\
             fb & 34.35 &    0.8 & 5.1 & 4.91 & 1156.1 & 2.8 & 4.91 & 1150.7 & 5.1 & 4.93 & 1441.3 &  7.2 \\
            osm & 12.92 & 1525.3 & 5.5 & 4.42 &  999.6 & 2.8 & 4.42 &  998.6 & 5.0 & 4.33 & 1272.9 &  6.8 \\ \midrule
        uniform &  2.94 & 3244.6 & 8.7 & 2.96 & 1903.3 & 3.5 & 2.98 & 1850.5 & 6.5 & 3.03 & 2192.0 &  8.7 \\
         normal & 34.27 &  105.3 & 4.8 & 2.95 & 1935.0 & 3.6 & 2.97 & 1858.0 & 6.6 & 3.00 & 1727.7 &  8.7 \\
    exponential &  5.42 & 2715.9 & 6.0 & 2.95 & 1876.9 & 3.6 & 2.98 & 1791.5 & 6.6 & 3.01 & 2085.1 &  8.8 \\
    \bottomrule
\end{tabular}

\end{table}

\begin{table}[t]
    \caption{Comparison of different variants of \lemonvl.
      The space usage is given in bits per key, the query throughput in kQueries/second, and the construction throughput (c.t.) in MKeys/second.
      Variants with and without alphabet reduction (AR), a special indexed variant (Idx, see \Cref{s:indexedChunk}), and a variant with fixed instead of auto-tuned parameter $\eps$ for the bucket mapper.
      }
    \label{tab:recursiveTable}
    \centering
    
\setlength{\tabcolsep}{4.5pt}
\begin{tabular}{l r r r r r r r r r r r r}
    \toprule
    Dataset & \multicolumn{3}{c}{$\eps=\text{auto}$, no AR} & \multicolumn{3}{c}{$\eps=\text{auto}$, AR} & \multicolumn{3}{c}{$\eps=63$, AR} & \multicolumn{3}{c}{Idx, $\eps=\text{auto}$, AR} \\
              \cmidrule(lr){2-4}                       \cmidrule(lr){5-7}                    \cmidrule(lr){8-10}                 \cmidrule(lr){11-13}
            & bpk & kq/s & c.t.                      & bpk & kq/s & c.t.                   & bpk & kq/s & c.t.                 & bpk & kq/s & c.t.                        \\ \midrule
    text & 6.52 & 1062.9 & 1.7 & 6.03 & 1005.8 & 1.6 & 6.08 & 1001.8 & 2.5 & 6.10 & 933.2 & 2.3 \\
     dna & 7.66 &  452.8 & 2.0 & 6.32 &  631.3 & 1.7 & 6.25 &  644.8 & 2.7 & 6.27 & 601.1 & 2.4 \\
    urls & 7.14 &  282.7 & 2.3 & 6.37 &  298.8 & 1.8 & 6.46 &  295.1 & 2.3 & 6.63 & 298.1 & 1.6 \\
    \bottomrule
\end{tabular}

\end{table}

\subparagraph*{\lemonvl.}
\Cref{tab:recursiveTable} lists the effect of alphabet reduction on the query and construction performance.
In general, alphabet reduction enables noticeable space improvements with only a small impact on the construction time.
For the dna dataset, which uses only 15 different characters, the alphabet reduction has the largest effect, saving 1.3 bits per key and simultaneously making the queries 40\% faster.
The faster queries can be explained by the reduced tree height.
Note that alphabet reduction makes the queries slightly slower for the other datasets.
The reason is that instead of one single \texttt{bswap} instruction for chunk extraction, it needs multiple arithmetic operations (including \texttt{popcount}) for each input character.
The indexed variant that builds chunks from the distinguishing bytes instead of a contiguous byte range (see \Cref{s:indexedChunk}) is slower to construct but does not show clear space savings, which can be explained by larger per-node metadata.
We also experimented with different thresholds for when to stop recursion, as well as the perfect chunk mapping (see \cref{s:perfectChunkMapping}).
Given that the space overhead from each bucket mapper is the same for all data sets, it is not surprising that the same threshold (128 keys) works well for all datasets (see \Cref{s:thresholdsAppendix}).
Finally, making the $\eps$ value of the PGM mapper constant instead of auto-tuned, we naturally get faster construction.
As in the integer case, one would expect a fixed $\eps$ value to always produce results that are the same or worse than the auto-tuned version.
This is not the case because, in the recursive setting, it is hard to estimate the effect of a mapper on the overall space usage.
Therefore, an $\eps$ value that needs more space locally can lead to a mapping that proves useful on a later level of the tree.
This is why $\eps=63$ can achieve better space usage than the auto-tuned version on the dna dataset.

\subsection{Comparison with Competitors}\label{s:comparison}

\begin{figure}[p]
    \centering
    \input{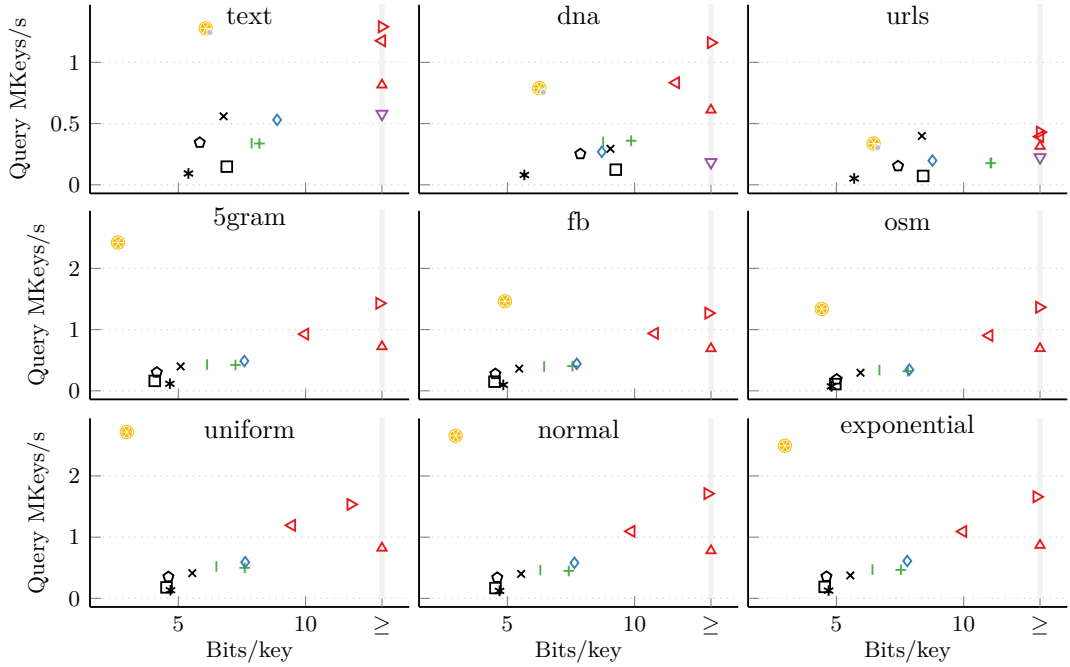}
    \caption{Query throughput for string, integer, and synthetic integer datasets vs space usage. The top-left corner of every plot shows the top-performing solutions in terms of space-time efficiency.}
    \label{fig:competitorsQuery}
\end{figure}

\begin{figure}[p]
    \centering
    \input{fig/construction.tex}

    \input{fig/competitorsLegend.tex}
    \caption{Construction throughput for string, integer, and synthetic integer datasets. Competitors with the \includegraphics[width=2mm]{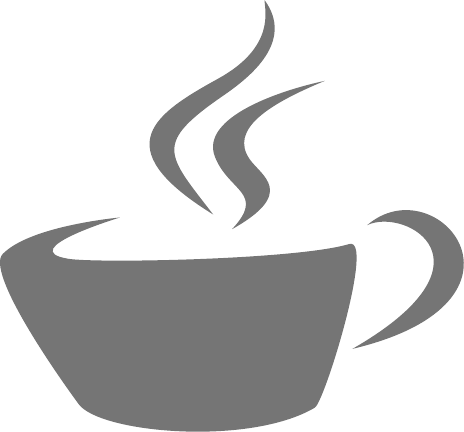} symbol in the legend are implemented in Java.}
    \label{fig:competitorsConstruction}
\end{figure}

In this section, we compare the performance of \lemon and \lemonvl with competitors from the literature.
Competitors include the C++ implementation by Grossi and Ottaviano~\cite{grossi2014decomposition} of the Centroid Hollow Trie, Hollow Trie, and Path Decomposed Trie.~%
Because that implementation only supports string inputs, we convert the integers to a list of fixed-length strings.
We point out that the Path Decomposed Trie crashes at an internal assertion when being run on integer datasets.
For the Hollow Trie, we encode the skips with either Gamma or Elias-Fano coding, whatever is better on the dataset. 
We also include the Java implementations by Belazzougui \etal~\cite{belazzougui2011theoryPractice} of a range of techniques (see \cref{s:bucketing}).
We use either the FixedLong or PrefixFreeUtf16 transformation, depending on the data type of the input.
For \lemon, we use the PGM mapper with $\eps=31$.
For \lemonvl, we use the PGM mapper with $\eps=63$, alphabet reduction and a recursion threshold $t=128$.

\subparagraph*{Queries.}
\Cref{fig:competitorsQuery} plots the query throughput against the achieved storage space.
In \Cref{tab:queryCompetitors} in the Appendix, we additionally detail the numbers in tabular format.
The LCP-based methods (see \cref{s:lcp}) have very fast queries but also need the most space (in fact, they appear to the top-right of the plots). At the same time, \lemon matches or even outperforms the query throughput of LCP-based methods, while being significantly more space efficient (in fact, it appears towards the top-left of the plots).
Compared to competitors with similar space usage, \lemon offers significantly higher query throughput.

\subparagraph*{Construction.}
\Cref{fig:competitorsConstruction} plots the construction throughput against the space needed.
On most synthetic integer datasets, \lemon provides a significant improvement to the state-of-the-art approaches, whereas it matches or outperforms the competitors on real-world datasets.
\lemon improves the construction throughput by up to a factor of \maxSpeedupConstructionNormalDistDRS{}, compared to the competitor with the next best space usage (typically, variants of the Hollow Trie).
While \lemonvl does not achieve the same space usage as the Hollow Trie Distributor, its construction is significantly faster, and still it is the second best in space usage.

\section{Conclusion and Future Work}\label{s:conclusion}
In this paper, we have introduced the monotone minimal perfect hash function \lemon.
\lemon, unlike previous solutions, learns and leverages data smoothness to obtain a small space usage and significantly faster queries.
On most synthetic and real-world datasets, \lemon dominates all competitors\,---\,simultaneously\,---\,on space usage, construction and query throughput.
Our extension to variable-length strings, \lemonvl, consists of trees that are significantly more flat and efficient to traverse than competitors.
This enables extremely fast queries with space consumption similar to competitors.

\subparagraph*{Future Work.}
Many MMPHF construction algorithms are based on the idea of explicitly storing ranks of keys within a small bucket.
The idea to split small buckets recursively that we mention in \Cref{s:variants} can help to reduce the space usage.
It remains an open problem whether the idea works in practice, especially when the distribution of keys inside the bucket is skewed.
It is also worth investigating a different construction of the piecewise linear approximation in the PGM that minimises the overall space given by the segments \emph{and} the local ranks stored in retrieval data structures, rather than the current approach that maximises the length of the segment (thus minimising just the segments space).
Applying non-linear transformations like low-degree polynomials within each segment would also be interesting future work.
Finally, it would be interesting to apply smoothed analysis to formally show that many real-world distributions locally behave as if they were uniform random, therefore leading to tighter space bounds.%

\bibliography{paper}

\clearpage
\appendix
\renewcommand{\thefigure}{A.\arabic{figure}}
\setcounter{figure}{0}
\renewcommand{\thetable}{A.\arabic{table}}
\setcounter{table}{0}

\section{Variants and Refinements}\label{s:variantsAppendix}
The following section explains in detail our proposed variants and refinements inside the \lemon framework.

\subparagraph*{External Memory Construction.}
To construct the PGM-index with a specific $\eps$ value, a single scan over the input data is sufficient.
As soon as one of the segments is constructed, the corresponding keys can be mapped to buckets and the input for the retrieval data structures can be generated.
The retrieval data structures can be constructed in external memory as well \cite{dillinger2022burr}.
The construction of \lemon can therefore be performed entirely in external memory.
External memory queries are possible by selecting a suitable data structure for predecessor queries inside the PGM-index (such as the recursive structure in~\cite{ferragina2020pgm}), as well as an external-memory encoding of the bucket sizes.
\lemonvl can be constructed and queried in external memory using similar considerations.
While the recursion needs additional passes over the input data, note that the construction is performed in depth-first order, so it can profit from the locality between different levels.

\subparagraph*{Parallel Construction.}
As described in \cite{belazzougui2011theoryPractice}, it is easy to divide any MMPHF into multiple buckets (see \cref{s:bucketing}).
The buckets can then be constructed independently in parallel, but this naive construction introduces some query overhead due to adding another layer on top of the data structure.
Instead, the \lemon construction can be parallelised transparently to the queries.
We can divide the input data into ranges and construct independent PGM-indexes on each range.
When concatenating the linear models of all ranges, we get a PGM-index for the whole input set.
An advantage of this approach is that it is transparent to the queries.
With the naive division, this index stores a negligible number of additional segments linear in the number of processors, but these cut-points can likely be ``repaired'' locally, so that we do not get a space overhead for most inputs.
Mapping all keys to buckets by evaluating the PGM and therefore determining the input for the retrieval data structures is possible in parallel as well.
Finally, the retrieval data structures can be constructed in parallel.
This is again transparent to the queries and introduces only a negligible space overhead linear in the number of processors \cite{dillinger2022burr}.
For variable-length strings, each node of the \lemonvl construction can be parallelised just like described above.
On top of that, different child nodes can be constructed independently in parallel.

\subparagraph*{Recursive Bucket Splitting.}
Inside a bucket, our implementation explicitly stores the ranks of all keys.
Let us call this strategy \emph{Direct Rank Storing} (DRS).
An alternative method to determine the ranks within a bucket is \emph{Recursive Bucket Splitting} (RBS).
Take a bucket of size $b$ that can contain keys from the range $(L, R)$.
We can now split this bucket in half by storing how many of the keys are smaller than $M = (L+R)/2$.
This takes $\lceil \log_2(b+1)\rceil$ bits and splits the bucket into two sub-buckets of average size $b/2$.
The two sub-buckets can be handled recursively.
For uniform random inputs with an average bucket size of $b \geq 3$, RBS needs less space than DRS.
This reduction in space usage comes at the cost of more expensive query operations.
In particular, we need to query the retrieval data structures for every level in that bucket-internal tree.
An additional problem with this variant is that it depends on the distribution of keys.
In the worst case, when all key values are very close to $L$, the approach repeatedly needs to store the fact that $b$ keys are smaller than the midpoint.
This can lead to a space usage close to $\log(b)\log(R-L)$, which can be arbitrarily large depending on the universe size.
We therefore did not implement this construction for \lemon.%
Whether the RBS technique still works well with real-world data sets remains an open question.
Given that many MMPHF construction algorithms use the bucketing technique (see \cref{s:related}), the RBS technique might still be of general interest for MMPHFs.

\subparagraph*{Indexed Chunk Extraction.}\label{s:indexedChunk}
As described in \cref{s:lemonvl}, the chunks in \lemonvl are generated from consecutive characters.
Now consider an input where the positions of branching characters of the keys are very far.
Then the chunks encode a lot of data that is not necessary to differentiate the keys.
Instead, it is possible to determine the distinct minima of the LCP values of strings in the corresponding node.
Then chunks can be generated from the positions at these minima, which reduces the height of the tree.
In practice, however, we find that the plain version is faster and more space efficient (see \cref{s:internalComparison}).

\section{Full Proofs}
\label{s:fullProofs}

\begin{proof}[Full proof of \cref{thm:linearMappingSpace}]
 For $n$ uniform random integers mapped to $n$ buckets, the number of keys per bucket follows a binomial distribution with $p=1/n$.
 For large $n$, we can approximate this by the Poisson distribution with $\lambda=n\cdot1/n=1$.
 Therefore, the probability that a bucket has size $k$ is $\frac{\lambda^ke^{-\lambda}}{k!}=\frac{1}{k!\,e}$.
 Storing a bucket of size $k$ requires $k$ entries in the corresponding retrieval data structure, and each needs $\lceil\log k\rceil$ bits.
 Note that buckets of size $0$ and $1$ do not need to store ranks.
 Using the linearity of expectation, the average total number of bits to store in retrieval data structures is:

 $$\mathds{E}(\textrm{space}) = n\cdot\mathds{E}(\textrm{space per bucket}) = n\cdot\sum_{k=2}^\infty k\lceil\log k\rceil \cdot \frac{1}{k!\, e} \approx 0.91536n \textrm{.}$$

 A succinct retrieval data structure can then store this using $\approx 0.91536n + o(n)$ bits of space.
 The Elias-Fano coded sequence of global ranks takes $2n+o(n)$~bits.
 Overall, we get a space usage of $\approx n(2.91536+o(1))$ bits.

 For queries, the evaluation of the linear function and rounding can be executed in constant time.
 Now that we have the bucket index, we retrieve its offset and size from that binary sequence using two constant time $\textit{select}_1$ queries.
 From that, we know which retrieval data structure to query, and the actual query works in constant time \cite{dillinger2022burr}.
\end{proof}

\begin{proof}[Full proof of \cref{thm:lemonPgm}]
The rank estimate returned by the PGM is guaranteed to be far from the correct rank by $\eps$.
In other words, given a bucket number $i \in [n]$, any of the input keys with rank between $\max\{1,i-\eps\}$ and $\min\{i+\eps,n\}$ can be mapped to it, thus yielding a bucket of size at most $b=2\eps+1$.
In the worst case, there are $n/(2\varepsilon+1)$ of such size-$b$ buckets, which overall require storing $n$ local ranks in a $\lceil\log b\rceil$-bit retrieval data structure.
Additional $2n+o(n)$~bits are needed for the Elias-Fano coded sequence of global ranks.

The remaining term of the space bound is given by the PGM, that we encode with an Elias-Fano representation of linear models' $(x,y)$-endpoints in $m(\log\tfrac{\us}{m} + \log\tfrac{n}{m} + 2\log(2\eps+1)) + \Oh{m}$~bits~\cite{Ferragina:2022}.
 This can be bounded by $\Oh{m \log \tfrac{\us}{m}}$~bits, since from \cite[Lemma 2]{ferragina2020pgm} it holds $2\eps \leq n/m \leq u/m$.
Finally, we build the predecessor structure of~\cite[Theorem~A.1]{Belazzougui:2015} on the linear models' keys, which takes $\Oh{m\log\tfrac{\us}{m}}$~bits and yields a query time of $\Oh{\log \log_w \tfrac{\us}{m}}$.
\end{proof}

\section{Low-Level Optimizations}\label{s:lowLevelOptimization}
In addition to the main algorithmic optimizations described in the main part, we here detail some more low-level optimizations of our implementation.

We encode the alphabet reduction as a bitmap and use the \texttt{popcount} instruction to determine a character's index.
For determining how many characters fit into a chunk with a given alphabet, we use a lookup table of size 256 because that is more efficient than a (floating point) logarithm and division.
Depending on the dataset, multiple nodes of the tree might use alphabet reduction with a similar alphabet.
When constructing a node, we therefore look if another node stores a superset of the alphabet that still leads to the same number of characters fitting into a chunk, and possibly re-use the alphabet.
If no alphabet reduction is used, we use the \texttt{bswap} instruction to immediately convert the next 8 characters to a chunk.

To speed up access in Elias-Fano coded sequences, we use the \texttt{clz} instruction, which counts the number of leading zeroes in a word.
When calculating the LCP of strings, we do so for multiple bytes at once using 64-bit comparisons.
This general idea was already evaluated in Ref. \cite{dinklage2020practical}.
To avoid accessing the strings during alphabet map creation (which would lead to cache faults), we annotate the LCP array with the branching characters.

To decode the PGM metadata, which is stored as integers of small width, we use the \texttt{bextr} instruction to extract specific bits from a word.
To evaluate the PGM, we use a 64-bit division with overflow detection instead of a 128-bit division because in practice, 64 bits are often enough to store the operands.
For the PGM that auto-tunes its $\eps$ value, we abort early when we detect that the PGM itself is already larger than the optimal cost.
This way, very small $\eps$ values can often be ruled out earlier.

\begin{figure}[t]
   \centering
   \hfill
\begin{tikzpicture}[trim axis left]
    \begin{axis}[
        plotThresholds,title={text},
        legend to name=thresholdsLegend,
        legend columns=4,
        xmode=log,
      ]
      \addplot coordinates { (64,6.18946) (128,6.06034) (256,6.08174) (512,6.124) };
      \addlegendentry{32};
      \addplot coordinates { (64,6.20075) (128,6.0539) (256,6.07736) (512,6.12388) };
      \addlegendentry{64};
      \addplot coordinates { (64,6.58633) (128,6.077) (256,6.07444) (512,6.12264) };
      \addlegendentry{128};
      \addplot coordinates { (64,6.74658) (128,6.28859) (256,6.08125) (512,6.11958) };
      \addlegendentry{256};
    \end{axis}
\end{tikzpicture}
\begin{tikzpicture}[trim axis left]
    \begin{axis}[
        plotThresholds,title={dna},ylabel={},yticklabels={,,},
        xmode=log,
      ]
      \addplot coordinates { (64,6.34738) (128,6.23959) (256,6.2733) (512,6.29164) };
      \addlegendentry{32};
      \addplot coordinates { (64,6.36706) (128,6.23959) (256,6.27332) (512,6.29164) };
      \addlegendentry{64};
      \addplot coordinates { (64,6.7317) (128,6.2513) (256,6.27331) (512,6.29164) };
      \addlegendentry{128};
      \addplot coordinates { (64,6.88883) (128,6.45635) (256,6.27864) (512,6.29165) };
      \addlegendentry{256};

      \legend{};
    \end{axis}
\end{tikzpicture}
\begin{tikzpicture}[trim axis left]
    \begin{axis}[
        plotThresholds,title={urls},ylabel={},yticklabels={,,},
        xmode=log,
      ]
      \addplot coordinates { (64,6.72289) (128,6.54965) (256,6.69865) (512,7.05178) };
      \addlegendentry{32};
      \addplot coordinates { (64,6.63917) (128,6.45625) (256,6.61303) (512,6.97491) };
      \addlegendentry{64};
      \addplot coordinates { (64,7.02418) (128,6.46125) (256,6.56651) (512,6.91841) };
      \addlegendentry{128};
      \addplot coordinates { (64,7.23706) (128,6.72692) (256,6.5791) (512,6.8832) };
      \addlegendentry{256};

      \legend{};
    \end{axis}
\end{tikzpicture}

{\small Perfect Chunk Mapping threshold:}
\begin{tikzpicture}[baseline=0.6cm]
    \ref*{thresholdsLegend}
\end{tikzpicture}

   \caption{Different thresholds for when to store ranks (of keys and chunks) explicitly.}
   \label{fig:thresholds}
\end{figure}
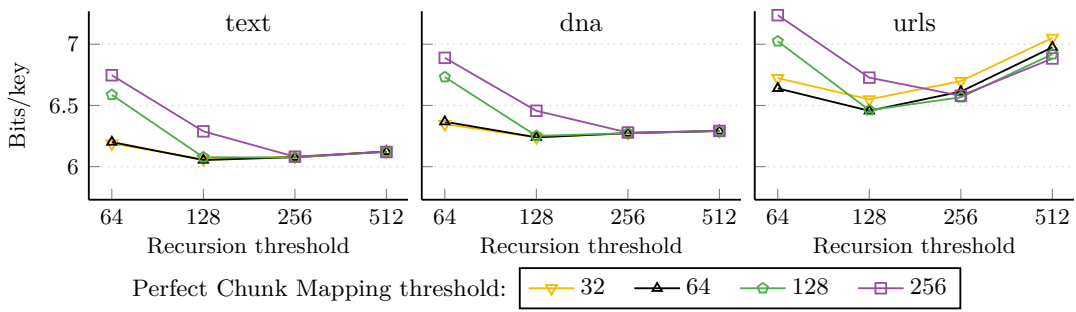

\section{Additional Experimental Data}
\Cref{tab:queryCompetitors} repeats the measurements of \cref{fig:competitorsConstruction,fig:competitorsQuery}, so that exact values can be compared.

\subparagraph*{Thresholds.}\label{s:thresholdsAppendix}
\Cref{fig:thresholds} compares different thresholds for when to stop recursion (see \cref{s:lemonvl}), as well as when to store ranks of chunks explicitly (see \cref{s:perfectChunkMapping}) in \lemonvl.
The behaviour of the different datasets is very similar, which is (as mentioned in \cref{s:internalComparison}) not surprising since the space overhead of a single-segment PGM bucket mapper is constant.
While we have not plotted the query performance here, note that queries get slightly faster when increasing the recursion threshold because that reduces the height of the tree.

\begin{sidewaystable}[t]
   \centering
   \caption{Comparison of string data sets. Query throughput is given in kQueries/s and space usage is given in bit/key (bpk).}
   \label{tab:queryCompetitors}
   
\setlength{\tabcolsep}{4.3pt}
\begin{tabular}{l r r r r r r r r r r r r r r r r r r}
    \toprule
    Method & \multicolumn{2}{c}{text} & \multicolumn{2}{c}{dna} & \multicolumn{2}{c}{urls} & \multicolumn{2}{c}{5gram} & \multicolumn{2}{c}{fb} & \multicolumn{2}{c}{osm} & \multicolumn{2}{c}{uniform} & \multicolumn{2}{c}{normal} & \multicolumn{2}{c}{exponential} \\
             \cmidrule(lr){2-3}         \cmidrule(lr){4-5}        \cmidrule(lr){6-7}         \cmidrule(lr){8-9}          \cmidrule(lr){10-11}     \cmidrule(lr){12-13}      \cmidrule(lr){14-15}          \cmidrule(lr){16-17}         \cmidrule(lr){18-19}
           & kq/s & bpk               & kq/s & bpk              & kq/s & bpk               & kq/s & bpk                & kq/s & bpk             & kq/s & bpk              & kq/s & bpk                  & kq/s & bpk                 & kq/s & bpk                      \\ \midrule
                                                 Centroid HT \cite{grossi2014decomposition} &                                                    560 &  6.78 &  294 &   9.05 & 399 &   8.36 &                                                 398 &  5.09 &  363 &  5.47 &  295 &  5.95 &  413 &  5.55 &  400 &  5.54 &  375 &  5.55 \\
        HTDist \includegraphics[width=2mm]{fig/coffee} \cite{belazzougui2011theoryPractice} &                                                     92 &  5.40 &   80 &   5.67 &  52 &   5.70 &                                                 115 &  4.67 &   97 &  4.84 &   73 &  4.81 &  133 &  4.69 &  122 &  4.69 &  127 &  4.69 \\
                                                      Hollow \cite{grossi2014decomposition} &                                                    345 &  5.84 &  252 &   7.87 & 153 &   7.42 &                                                 300 &  4.15 &  276 &  4.53 &  187 &  5.01 &  351 &  4.61 &  339 &  4.60 &  356 &  4.61 \\
        Hollow \includegraphics[width=2mm]{fig/coffee} \cite{belazzougui2011theoryPractice} &                                                    148 &  6.90 &  124 &   9.26 &  73 &   8.41 &                                                 162 &  4.07 &  150 &  4.50 &  110 &  4.96 &  179 &  4.54 &  169 &  4.53 &  188 &  4.54 \\
    LCP 2-step \includegraphics[width=2mm]{fig/coffee} \cite{belazzougui2011theoryPractice} &                                                   1176 & 13.12 &  834 &  11.62 & 394 &  17.81 &                                                 926 &  9.98 &  938 & 10.79 &  903 & 11.00 & 1193 &  9.46 & 1096 &  9.87 & 1093 &  9.97 \\
           LCP \includegraphics[width=2mm]{fig/coffee} \cite{belazzougui2011theoryPractice} &                                                   1291 & 21.61 & 1161 &  16.23 & 430 &  22.74 &                                                1429 & 12.90 & 1269 & 12.90 & 1364 & 12.97 & 1535 & 11.77 & 1711 & 12.87 & 1660 & 12.87 \\
          PaCo \includegraphics[width=2mm]{fig/coffee} \cite{belazzougui2011theoryPractice} &                                                    339 &  7.88 &  350 &   8.77 & 181 &  11.09 &                                                 429 &  6.13 &  397 &  6.44 &  340 &  6.69 &  522 &  6.50 &  463 &  6.30 &  471 &  6.42 \\
                                                Path Decomp. \cite{grossi2014decomposition} &                                                    579 & 54.44 &  185 & 148.27 & 224 & 228.88 & \makebox[1mm][l]{\hspace{3.8cm}crashes on integers} &       &      &       &      &       &      &       &      &       &      &       \\
         VLLCP \includegraphics[width=2mm]{fig/coffee} \cite{belazzougui2011theoryPractice} &                                                    816 & 18.43 &  611 &  20.13 & 315 &  22.59 &                                                 723 & 16.30 &  690 & 17.56 &  692 & 16.86 &  823 & 16.26 &  780 & 16.27 &  868 & 16.27 \\
        VLPaCo \includegraphics[width=2mm]{fig/coffee} \cite{belazzougui2011theoryPractice} &                                                    337 &  8.19 &  360 &   9.86 & 177 &  11.06 &                                                 423 &  7.25 &  404 &  7.56 &  320 &  7.81 &  500 &  7.61 &  449 &  7.41 &  465 &  7.53 \\
         ZFast \includegraphics[width=2mm]{fig/coffee} \cite{belazzougui2011theoryPractice} &                                                    530 &  8.88 &  269 &   8.71 & 198 &   8.77 &                                                 487 &  7.59 &  441 &  7.73 &  345 &  7.87 &  591 &  7.63 &  581 &  7.64 &  611 &  7.78 \\
                                                                      \textbf{LeMonHash-VL} &                                                   1278 &  6.08 &  790 &   6.25 & 338 &   6.46 &                                                1458 &  2.98 & 1111 &  4.91 &  857 &  4.39 & 1572 &  3.33 & 1647 &  3.32 & 1635 &  3.33 \\
                                                                         \textbf{LeMonHash} & \makebox[1mm][l]{\hspace{0.8cm}only supports integers} &       &      &        &     &        &                                                2421 &  2.63 & 1463 &  4.91 & 1338 &  4.42 & 2718 &  2.98 & 2657 &  2.97 & 2493 &  2.98 \\
    \bottomrule
\end{tabular}

\end{sidewaystable}

\end{document}